\documentclass[conference,a4paper]{IEEEtran}

\addtolength{\topmargin}{9mm}

%
%
\usepackage[utf8]{inputenc} 
\usepackage[T1]{fontenc}
\usepackage{url}
\usepackage{ifthen}
\usepackage{cite}
\usepackage[cmex10]{amsmath}

\usepackage{amsmath,amssymb,amsthm,cite,amsopn}
\usepackage{graphicx}
\usepackage{color}
\usepackage{tikz}
\usetikzlibrary{patterns}
\usepackage{comment}

\newtheorem{theorem}{Theorem}
\newtheorem{example}{Example}
\newtheorem{corollary}{Corollary}

\DeclareMathOperator{\sgn}{sgn}
\newcommand{\rmv}[1]{}
\usepackage{algorithm}
\usepackage{algorithmic}



\interdisplaylinepenalty=2500 

\hyphenation{op-tical net-works semi-conduc-tor}

\begin{document}
\title{Service Rate Region of Content Access from Erasure Coded Storage}

 \author{
   \IEEEauthorblockN{Sarah E. Anderson\IEEEauthorrefmark{1},
                     Ann Johnston\IEEEauthorrefmark{2},
                     Gauri~Joshi\IEEEauthorrefmark{5}, 
                     Gretchen L. Matthews\IEEEauthorrefmark{3},
                     Carolyn Mayer\IEEEauthorrefmark{4}, and Emina~Soljanin\IEEEauthorrefmark{6}}
   \IEEEauthorblockA{\IEEEauthorrefmark{1}%
                     University of St. Thomas,
                     St. Paul, Minnesota, USA,
                    ande1298@stthomas.edu}
   \IEEEauthorblockA{\IEEEauthorrefmark{2}%
                     Penn State University,
                     University Park, Pennsylvania, USA,
                     abj5162@psu.edu}
                      \IEEEauthorblockA{\IEEEauthorrefmark{5}%
                      Carnegie Mellon University,
                      Pittsburgh, PA, USA,
                      gaurij@andrew.cmu.edu}
                      
   \IEEEauthorblockA{\IEEEauthorrefmark{3}%
                     Clemson University, 
                     Clemson, South Carolina, USA,
                     gmatthe@clemson.edu}
   \IEEEauthorblockA{\IEEEauthorrefmark{4}%
                     Worcester Polytechnic Institute, 
                     Worcester, Massachusetts, USA,
                     cdmayer@wpi.edu}
                                  \IEEEauthorblockA{\IEEEauthorrefmark{6}%
                      Rutgers University,
                      New Brunswick, NJ, USA,
                      emina.soljanin@rutgers.edu}        
                     
 }

\IEEEoverridecommandlockouts
\IEEEpubid{\makebox[\columnwidth]{\copyright2018 IEEE \hfill} \hspace{\columnsep}\makebox[\columnwidth]{ }}
\maketitle
\IEEEpubidadjcol

\begin{abstract}
  We consider storage systems in which $K$ files are stored over $N$ nodes. A node may be systematic for a particular file in the sense that access to it gives access to the file. Alternatively, a node may be coded, meaning that it gives access to a particular file only 
when combined with other nodes (which may be coded or systematic). Requests for file $f_k$ arrive at rate $\lambda_k$, and we are interested in the rate 
  that can be served by a particular system. In this paper, we determine the set of request arrival rates for the a $3$-file coded storage system. We also provide an algorithm to maximize the rate of requests served for file $K$ given \textbf{}$\lambda_1,\dots, \lambda_{K-1}$ in a general $K$-file case.

\end{abstract}


\section{Introduction}
\everymath{\small}

The explosive growth in the amount of data stored in the cloud calls for new techniques to make cloud infrastructure fast, reliable, and efficient. Moreover, applications that access this data from the cloud are becoming increasingly interactive. Thus, in addition to providing reliability against node failures, service providers must\rmv{have to} be able to serve a large number of users simultaneously.

Content files are typically replicated\rmv{files} at multiple nodes to cope with node failures. These replicas can also be used to serve a larger volume of users. To adapt to changes in popularity of content files, service providers can increase or decrease the number of replicates for each file, a strategy that has been widely used in content delivery networks \cite{borst2010distributed}. The use of erasure coding, instead of replication, to improve the availability of content is not yet fully understood. Using erasure codes has been shown to be effective in reducing the delay in accessing a file stored on multiple servers  \cite{joshi2014delay, shah2014mds, joshi2017efficient}.  However, only a few works have studied their use to store multiple files. Some recent works \cite{rawat_isit_2014} have proposed new classes of erasure codes to store multiple files that allow a file to be read from from disjoint sets of nodes. Other works \cite{swanand_allerton_2015,Simplex:AktasNS17} study the delay reduction achieved using these codes.

Besides  download latency, it has recently been recognized that another important metric for  the availability of  stored data is the service rate \cite{noori2016allocation,aktas2017service,joshi2017boosting}. Maximizing the service rate (or the throughput) of a distributed system helps support a large number of simultaneous system users. Rate-optimal strategies are also latency-optimal in high traffic. Thus, maximizing the service rate also reduces the latency experienced by users, particularly in highly contending scenarios. 

This paper is one of the first to analyze the \emph{service rate region} of a coded storage system. \rmv{The service capacity region achievable service rate region is the set of file request arrival rates\rmv{request rates for different files} that a system can support\rmv{can be supported}.} We consider distributed storage systems in which data for $K$ files is to be stored across $N$ nodes. A request for one of the files can be either  sent to a systematic node or to one of the repair groups. We seek to maximize such systems'\rmv{achievable service capacity region} service rate region, that is, the set of request arrival rates for the $K$ files that can be supported by a coded storage system. 

The problem addressed in this paper should not be confused with the related problem of caching and pre-fetching of popular content at edge devices \cite{Caching:AliN12}.
Caching benefits for the network are measured in reduction in the backhaul traffic it enables. Quality of service to the user measures include cache hit ratio and cache hit distance. Rather than with the backhaul, this paper is concerned with the access part of the network, namely, with potential service rate increase through work provided, jointly and possibly redundantly, by multiple network edge devices. Consequently,
instead of measuring e.g., content download performance by the likelihood of an individual cache hit or cache memory and bandwidth usage, we strive to ensure that multiple caches are jointly in possession of content and can deliver it fast to multiple simultaneous users.

In \cite{aktas2017service}, the \rmv{service capacity region}achievable service rate region was found for some common classes of codes, such as maximum-distance-separable (MDS) codes and simplex codes. That paper also determined the \rmv{capacity}service rate region when $K=2$, with  arbitrary numbers of systematic and coded nodes. We generalize this \rmv{service capacity}service rate region result from $K=2$ files to $K=3$ files and provide an algorithm to maximize the requests served for a given file with general $K$. 
The paper begins with preliminary notions given in Sec.~\ref{prelim}. Sec.~\ref{arb} addresses the general $K$ case where all nodes are coded, and Sec.~\ref{three} addresses the $K=3$ case. We return to the general case in Sec.~\ref{Kfiles_section}.
 

\section{Preliminaries} \label{prelim}

Suppose  files  $f_1, \ldots, f_K$ are stored across a  system that consists of $N$ nodes labeled $1,\ldots, N$.  For  $k \in [K]:= \left\{ 1, \dots, K \right\}$, there is a collection of minimal sets $R_{k1}, \dots,  R_{k \gamma_k} \subseteq [N]$ that each correspond to a set of nodes  that gives access to file $f_k$. Each such minimal set of nodes is called an {$f_k$--repair group}. 

\begin{example}
Fig. \ref{fig:RepairGroupExample} shows one possible way to store two files, $a$ and $b$, across four nodes. In this system, the $a$-repair groups are $\{1\},\{2\}$, and $\{3,4\}$. The $b$-repair groups are $\{4\},\{1,3\},$ and $\{2,3\}$.
\begin{figure}
\begin{center}
\begin{tikzpicture}[scale=1]
\draw [rotate around={0.:(-2.18,4.7)},fill=blue,fill opacity=.2] (-2.18,4.7) ellipse (0.5cm and 0.25cm);
\draw [rotate around={0.:(-0.86,4.7)},fill=blue,fill opacity=.2] (-0.86,4.7) ellipse (0.5cm and 0.25cm);
\draw [rotate around={0.:(0.48,4.7)},fill=blue,fill opacity=.2] (0.48,4.7) ellipse (0.5cm and 0.25cm);
\draw [rotate around={0.:(1.82,4.7)},fill=blue,fill opacity=.2] (1.82,4.7) ellipse (0.5cm and 0.25cm);
\draw (-2.2,4.7) node {$a$};
\draw (-2.2,4.1) node {$1$};
\draw (-.9,4.7) node {$a$};
\draw (-.9,4.1) node {$2$};
\draw (.5,4.7) node {$a+b$};
\draw (.5,4.1) node {$3$};
\draw (1.8,4.7) node{$b$};
\draw (1.8,4.1) node{$4$};
\end{tikzpicture}
\vspace{-.2in}
\end{center}
\label{fig:RepairGroupExample}
\caption{A possible way to store two files on $N=4$ nodes.}
\end{figure}
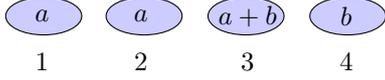
\end{example}

For  $(i,j) \in [\gamma_k]\times [N]$, define the function
 \begin{align}
  \delta_k(i,j) :=
  \begin{cases}
 1, & \text{if node $j$ is in the $f_k$--repair group $R_{ki}$}, \\
  0, & \text{else}.
  \end{cases}
  \end{align}

 Suppose that when a request for file  $f_k$ is received, that request is sent at random to  an $f_k$--repair group according to a  {splitting strategy} 
 with $\alpha_{ki} \geq 0$ denoting the fraction of requests sent to repair group $R_{ki}$, so that for each $k \in [K]$,
 \begin{equation}
 \sum_{i \in [\gamma_k]} \alpha_{ki}=1.
 \end{equation}
 Let  the {demand} for file $f_k$ be   $\lambda_k$, so  the arrival of requests for file $f_k$  to the storage system queue is Poisson with   rate $\lambda_k$, and let  $\boldsymbol{\lambda} = (\lambda_1, \ldots, \lambda_K)$ record the demand  for files $f_1, \ldots, f_K$. 
 
The average rate  that file requests arrive at  a storage system node depends  both on the splitting strategy for  file requests and  on the  demand  $\boldsymbol{\lambda}$.
 More precisely,  the average rate that file requests are received at node $j \in [N]$ is
 \begin{equation}
  \sum_{k \in [K]} \sum_{i \in [\gamma_k]} \alpha_{ik} \delta_k(i,j)\,\lambda_k.
  \end{equation}

Let $\mu_j$ denote the average rate of resolving received file requests at node $j$.    Whenever  demand  is such that  at least one  node $j$ of the storage system  receives requests at an average  rate in excess of its $\mu_j$,  the storage system queue will have a tendency to grow.  With this in mind, it is appropriate to call $\mu_j$ the {service rate} of node $j$. We will consider uniform systems for which $\mu_j=1$ for $j=1,\dots,N$.
 If, at   demand $\boldsymbol{\lambda}$,  there exists a splitting strategy under which no storage system node receives requests at a rate in excess of its service rate, then $\boldsymbol{\lambda}$  is said to be in the \rmv{service capacity region}achievable service rate region of the storage system.  
 More formally, the storage system's {\rmv{service capacity region}achievable service rate region $\mathcal{S}$ is the set of all $\boldsymbol{\lambda} \in \mathbb{R}^K_{\geq 0}$ such that  {there exists a splitting strategy} with
\begin{equation}  \label{SCcriteriaCapacity}
  \sum_{k \in [K]} \sum_{i \in [\gamma_k]} \alpha_{ki} \delta_k(i,j)\,\lambda_k \leq
  \mu_j,\quad \text{for all }j \in [N].
\end{equation}  
For any $\boldsymbol{\lambda}=(\lambda_1,\dots,\lambda_K)\in \mathbb{R}^K_{\geq 0}$, denote by  $\boldsymbol{\lambda}_{\widehat{k}}$ the $(K-1)$-tuple $(\lambda_1, \ldots, \lambda_{k-1},\lambda_{k+1}, \ldots \lambda_K)$, and for $x\in \mathbb{R}_{\geq 0}$ let $\boldsymbol{\lambda}_{\widehat{k}}\times \{x\}:=(\lambda_1, \ldots, \lambda_{k-1},x,\lambda_{k+1}, \ldots \lambda_K$). If $\boldsymbol{\lambda} \in \mathcal{S}$, then the same splitting strategy whose  existence  is guaranteed by (\ref{SCcriteriaCapacity})  is also sufficient to give $\boldsymbol{\lambda}'\in \mathcal{S}$ for every $\boldsymbol{\lambda}'$  satisfying  for all $k \in [K]$, $\lambda'_k \leq \lambda_k$.   
 Thus, 
 given any pair  $\boldsymbol{\lambda}_{\widehat{k}} \times \{0\} \in \mathcal{S}$ and    $\boldsymbol{\lambda}_{\widehat{k}} \times \{\lambda_k\}  \in \mathcal{S}$, 
 the entire interval $\boldsymbol{\lambda}_{\widehat{k}}\times [0,\lambda_k]$ is  in $\mathcal{S}$.  Moreover, for any storage system (regardless of its coding), if $\boldsymbol{\lambda}$ 
  is such that the demand for any file $f_k$ is in excess of $N \cdot \max_{j \in [N]} \{\mu_j\}$, 
  then under  all possible assignment strategies (\ref{SCcriteriaCapacity}) is violated for at least one node $j\rmv{ \in [N]}$, and so 
  $\boldsymbol{\lambda}_{\widehat{k}}\times \{x\}$ is not in $\mathcal{S}$
   for any $x>N\cdot{\max_{j \in [N]} \{\mu_j\}}$ and $\boldsymbol{\lambda}_{\widehat{k}} \in \mathbb{R}^{K-1}_{\geq 0}$. In this way, $\mathcal{S}$ is a non-empty, closed, and bounded subset of $\mathbb{R}^{K}_{\geq 0}$.  Therefore, given any $\boldsymbol{\lambda}_{\widehat{k}}\times\{0\} \in \mathcal{S}$, there exists a maximal value of $\lambda_k$ such that  $\boldsymbol{\lambda}_{\widehat{k}}\times [0, \lambda_k] \subset \mathcal{S}$ and $\boldsymbol{\lambda}_{\widehat{k}}\times  \{\lambda_k'\} \not\in \mathcal{S}$  for any $\lambda_k' >\lambda_k$.  When $k=K$, we call this maximal value $L(\boldsymbol{\lambda}_{\widehat{K}})$. \rmv{
 $L : \mathbb{R}^{K-1}_{\geq0}\rightarrow\mathbb{R}_{\geq0}$ defined by 
   $$
   \boldsymbol{\lambda} \mapsto 
  \begin{cases}
                                   L\Big( (\lambda_1, \ldots, \lambda_{K-1})\Big), & \text{if $(\lambda_1, \ldots, \lambda_{K-1},0)\in \mathcal{S}$} \\
                                   0, & \text{else.}
  \end{cases}
$$ 
\rmv{
$$
\begin{array}{llll} 
L : &\mathbb{R}^{K-1}_{\geq0} &\rightarrow &\mathbb{R}_{\geq0}\\
   &\boldsymbol{\lambda} &\mapsto 
  &\begin{cases}
                                   L\Big( \boldsymbol{\lambda} \Big), & \text{if $(\lambda_1, \ldots, \lambda_{K-1},0)\in \mathcal{S}$} \\
                                   0, & \text{else.}
  \end{cases}
  \end{array}
$$ }
\rmv{ \begin{align}
  L :\quad & \mathbb{R}^{K-1}_{\geq0}\rightarrow\mathbb{R}_{\geq0}\nonumber\\
  & (\lambda_1, \ldots, \lambda_{K-1}) \mapsto 
  \begin{cases}
                                   L\Big( (\lambda_1, \ldots, \lambda_{K-1})\Big), & \text{if $(\lambda_1, \ldots, \lambda_{K-1},0)\in \mathcal{S}$} \\
                                   0, & \text{else.}
  \end{cases}
 \end{align}  }}
 In this notation, 
 %
the \rmv{service capacity region}service rate region of any storage system can be described as: 
  \begin{equation}
 \label{eq:SwrtL}
 \mathcal{S} =  \{\boldsymbol{\lambda}_{\widehat{K}}\times [0,L(\boldsymbol{\lambda}_{\widehat{K}})]:(\lambda_1, \ldots, \lambda_{K-1}, 0) \in \mathcal{S}\}.
\end{equation} 
%
\rmv{
 Within this framework, the service capacity region of a storage system is fully described by solving, at each point $\boldsymbol{\lambda}_{\widehat{K}}\in \mathcal{S}^{(K-1)}$, the   optimization problem:
   \begin{align}
  L(\boldsymbol{\lambda}_{\widehat{K}})&:=\max_{\Big{ \{} \lambda_K,\,\bigcup_{k \in [K]} \{\alpha_{\gamma_k}: \gamma_k \in \mathcal{A}_k\} \Big{\}}} \lambda_K, \\
  &\text{subject to}\nonumber\\
  & \text{for all $k \in [K]$:}\nonumber\\
  &\quad \quad \sum_{\gamma_k\in\mathcal{A}_k} \alpha_{\gamma_k}=1,\quad \alpha_{\gamma_k} \geq 0 \text{ for all $\gamma_k \in \mathcal{A}_k$},\quad\text{and}\quad \lambda_K \geq 0;\\
    & \text{and, for all $n \in [N]$:}\nonumber\\
  &\quad \quad  \sum_{k \in [K]} \sum_{\gamma_k \in \mathcal{A}_k} \alpha_{\gamma_k} \delta_k(\gamma_k,n)\,\lambda_k \leq \mu_n,\quad \text{for all $n \in [N]$}.
  %
  \end{align}}




\rmv{
Consider a distributed storage system in which data for $K$ files is to be stored across $N$ nodes. There are a number of ways in which this could be done. For instance, the nodes could be formed through file replication, meaning each node simply stores a copy of one of the $K$ files; such a node will be called systematic. Access to a systematic node for a particular file gives full access to that file. Alternatively, the information needed to obtain a file could be spread over several nodes; such nodes are called coded. Access to a coded node for a particular file gives access to a file only when combined with certain other nodes (which may be coded or systematic). Chunks of a file may be stored across several nodes, or more sophisticated coding may be used. In this paper, we consider systems consisting of both systematic and coded nodes with a focus on how these systems support file requests. }
\vspace{-.05in}
\begin{example}
Three examples of how two files, $a$ and $b$ may be stored across three nodes are shown on the below on the left. The resulting service \rmv{capacity regions} service rate regions for each system are shown below on the right.
\begin{center}
\vspace{-.05in}
\begin{tikzpicture}[scale=1]
\draw [rotate around={0.:(-2.18,4.7)},pattern=crosshatch dots,pattern color=magenta] (-2.18,4.7) ellipse (0.5cm and 0.25cm);
\draw [rotate around={0.:(-0.86,4.7)},pattern=crosshatch dots,pattern color=magenta] (-0.86,4.7) ellipse (0.5cm and 0.25cm);
\draw [rotate around={0.:(0.48,4.7)},pattern=crosshatch dots,pattern color=magenta] (0.48,4.7) ellipse (0.5cm and 0.25cm);
\draw (-2.2,4.7) node {$a$};
\draw (-.9,4.7) node {$b$};
\draw (.5,4.7) node {$b$};

\draw [rotate around={0.:(-2.18,3.7)},pattern=vertical lines, pattern color=blue] (-2.18,4) ellipse (0.5cm and 0.25cm);
\draw [rotate around={0.:(-0.86,3.7)},pattern=vertical lines, pattern color=blue] (-0.86,4) ellipse (0.5cm and 0.25cm);
\draw [rotate around={0.:(0.48,3.7)},pattern=vertical lines, pattern color=blue] (0.48,4) ellipse (0.5cm and 0.25cm);
\draw (-2.2,4) node {$a$};
\draw (-.9,4) node {$a+b$};
\draw (.5,4) node {$b$};

\draw [rotate around={0.:(-2.18,2.7)},pattern= horizontal lines,pattern color=green] (-2.18,3.3) ellipse (0.5cm and 0.25cm);
\draw [rotate around={0.:(-0.86,2.7)},pattern= horizontal  lines,pattern color=green] (-0.86,3.3) ellipse (0.5cm and 0.25cm);
\draw [rotate around={0.:(0.48,2.7)},pattern= horizontal  lines,pattern color=green] (0.48,3.3) ellipse (0.5cm and 0.25cm);
\draw (-2.2,3.3) node {$a$};
\draw (-.9,3.3) node {$a$};
\draw (.5,3.3) node {$b$};
\node (spacing) at (.5,2.5){};
\end{tikzpicture}
\begin{tikzpicture}[ultra thick,scale=1]
\draw[help lines, color=gray!30, dashed] (-.01,-.01) grid (2.1,2.1);
\draw[->,ultra thick,black] (0,0)--(2.25,0) node[right]{$\lambda_a$};
\draw[->,ultra thick,black] (0,0)--(0,2.25) node[above]{$\lambda_b$};
\draw[ultra thick,magenta,pattern=crosshatch dots,pattern color=magenta] (0,2)--(1,2)--(1,0)--(0,0);
\draw[ultra thick,blue,pattern=vertical lines, pattern color=blue] (0,2)--(1,1)--(2,0)--(0,0);
\draw[ultra thick,color=green,pattern= horizontal lines,pattern color=green] (0,1)--(2,1)--(2,0)--(0,0);

\node[circle,scale=.5,label=left:$0$] (O) at (0,0) {};
\node[circle,scale=.5,label=below:$\mu$] (x1) at (1,0) {};
\node[circle,scale=.5,label=below:$2\mu$] (x2) at (2,0) {};
\node[circle,scale=.5,label=left:$\mu$] (y1) at (0,1) {};
\node[circle,scale=.5,label=left:$2\mu$] (y2) at (0,2) {};
\end{tikzpicture}
\end{center}
\end{example}
Coding schemes that  use a mixture of replication and MDS coding are not conventional.  However, if the service rate region is used as a performance metric, then a combination of coded and systematic nodes has been shown to be beneficial \cite{swanand_allerton_2015,aktas2017service}. In this paper, we consider storage systems for $K$ files whose coded nodes  satisfy  the following three conditions:
\begin{enumerate}
 \item Each $K$--subset of coded nodes forms an $f_k$--repair group for every $k \in [K]$.
 \item No subset of $k<K$ coded nodes forms an $f_k$--repair group, for any $k \in [K]$.
 \item With  addition of  systematic nodes for any $n$ distinct files (naturally, $n<K$) every  $(K-n)$--subset of coded nodes from the core  completes these systematic nodes to form an $f_k$--repair group for every $k \in [K]$.
\end{enumerate}
We say that such a system has an MDS core. We consider situations with uniform node capacities $\mu = \mu_1 = \cdots = \mu_N$. 

\rmv{In this scenario, files may have different levels of popularity. Some files may contain hot data, which means it is accessed frequently, while other files may not be downloaded as often. We are interested in  the availability of the stored data to users. A measure of this is the service capacity region, which is the space of download request rates for which the system is stable. This has been considered in \cite{noori2016allocation}. In \cite{aktas2017service}, the service capacity region is fully described for MDS $2$-file cores in storage systems with arbitrarily many systematic nodes for both files. 

In this paper, we consider two situations, both with uniform node capacities. Section \ref{arb} addresses the $K$ case where all nodes are coded, and Section \ref{three} addresses the $K=3$ file case. We return to the general case in Section \ref{Kfiles_section}. The paper begins with preliminary notions are given in Section \ref{prelim} and concludes with Section \ref{conclusion}.}

For convenience, we use $C$ to denote the number of coded nodes in such a  core. When  systematic nodes are also present, we use $N_k$ to denote the number of systematic nodes for file $f_k$.  In this way, the total number of nodes in a storage system for $K$ files that has an MDS core is $
N = C + \sum_{k=1}^K N_k$.
 

\section{All coded nodes} \label{arb}

We begin by considering an MDS $K$-file core where there are no systematic nodes in the system. In this situation, all nodes form a repair group for each file, and $K$ 
nodes are required to recover any file. 

\begin{theorem}
\label{allcodedK}
Assume $N_1= \cdots = N_K = 0$.  If  there are $C > K - 1$ coded nodes, then the \rmv{service capacity region}achievable service rate region $\mathcal{S}$ is the set of all $\boldsymbol{\lambda}$ with $\sum_{i=1}^K \lambda_i\leq \frac{C}{K}\mu$, and so   $L(\lambda_1, \ldots, \lambda_{K-1}) =\frac{C}{K}\mu-\sum_{i = 1}^{K - 1} \lambda_i$.
If there are $C \leq K - 1$ coded nodes, then  \rmv{the service capacity region}$\mathcal{S}$ is the point $(0,\ldots,0).$
\end{theorem}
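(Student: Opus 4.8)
\emph{Proof proposal.} The plan is to first pin down the repair groups from the MDS-core axioms, then handle the two regimes separately, and within the regime $C\ge K$ prove matching upper and lower bounds on $\mathcal{S}$.

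Since $N_1=\cdots=N_K=0$, every node is coded and $N=C$. Conditions (1)--(2) defining the MDS core then force the $f_k$--repair groups to be exactly the $K$--element subsets of $[N]=[C]$, for every $k\in[K]$; in particular $\gamma_k=\binom{C}{K}$ and $|R_{ki}|=K$ for all $k,i$. If $C\le K-1$ there are no such subsets, hence no repair groups at all, and no splitting strategy can serve any positive demand: for any $k$ with $\lambda_k>0$ the requirement $\sum_{i\in[\gamma_k]}\alpha_{ki}=1$ ranges over an empty index set and cannot hold, and in any case no node ever receives such a request. Thus $\mathcal{S}=\{(0,\ldots,0)\}$, which settles the second claim.

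Now assume $C\ge K$. For the upper bound, I would sum the feasibility constraint (\ref{SCcriteriaCapacity}) over all $j\in[N]$. Using $\sum_{j\in[N]}\delta_k(i,j)=|R_{ki}|=K$ and $\sum_{i\in[\gamma_k]}\alpha_{ki}=1$, the left-hand side collapses to $K\sum_{k\in[K]}\lambda_k$, while the right-hand side is $\sum_{j\in[N]}\mu_j=N\mu=C\mu$; hence every $\boldsymbol{\lambda}\in\mathcal{S}$ satisfies $\sum_{k\in[K]}\lambda_k\le\frac{C}{K}\mu$. For the reverse inclusion I would exhibit the symmetric splitting strategy $\alpha_{ki}=1/\binom{C}{K}$ for all $k,i$ (clearly nonnegative and summing to $1$ over $i$). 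A fixed node $j$ lies in $\binom{C-1}{K-1}$ of the $\binom{C}{K}$ repair groups of each file, and $\binom{C-1}{K-1}/\binom{C}{K}=K/C$, so the aggregate arrival rate at node $j$ equals $\frac{K}{C}\sum_{k\in[K]}\lambda_k$, which is $\le\mu$ precisely when $\sum_{k\in[K]}\lambda_k\le\frac{C}{K}\mu$. This shows every such $\boldsymbol{\lambda}$ is in $\mathcal{S}$, so $\mathcal{S}=\{\boldsymbol{\lambda}\in\mathbb{R}^K_{\ge0}:\sum_{i=1}^K\lambda_i\le\frac{C}{K}\mu\}$. Reading off the largest feasible $\lambda_K$ for fixed $\lambda_1,\ldots,\lambda_{K-1}$ on the region where $(\lambda_1,\ldots,\lambda_{K-1},0)\in\mathcal{S}$ (i.e. $\sum_{i=1}^{K-1}\lambda_i\le\frac{C}{K}\mu$) yields $L(\lambda_1,\ldots,\lambda_{K-1})=\frac{C}{K}\mu-\sum_{i=1}^{K-1}\lambda_i$ via (\ref{eq:SwrtL}).

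I do not expect a deep obstacle: the entire content lies in noticing that the averaged constraint telescopes via the double-counting identity $\sum_{j}\delta_k(i,j)=K$, and in guessing the uniform splitting strategy, whose optimality is then immediate from the matching bound. The only points requiring care are the degenerate regime $C\le K-1$ and making sure the closed-form for $L$ is asserted only on the subregion where the base point with $\lambda_K=0$ actually belongs to $\mathcal{S}$; both are dispatched by the observations above.
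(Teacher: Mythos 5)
Your proposal is correct and follows essentially the same route as the paper's proof: the paper's upper bound (``every repair group requires $K$ nodes, so total served demand is at most $\frac{C\mu}{K}$'') is exactly your double-counting identity $\sum_j \delta_k(i,j)=K$ made explicit, and the paper's achievability step of sending demand $\frac{\mu}{\binom{C-1}{K-1}}$ to each of the $\binom{C}{K}$ repair groups is the same uniform splitting strategy you propose. Your write-up is merely more formal about the degenerate case and about where the formula for $L$ applies; no substantive difference.
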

\begin{proof}
If $C \leq K - 1$, then no file can be recovered and the \rmv{service capacity region} service rate region is the point $(0,\ldots,0)$. 

Assume $C > K - 1$. Note that since every repair group requires $K$ nodes, the total demand that can be served is bounded above by $\frac{C\mu}{K}$. For each file, there are a total of $\binom{C}{K}$ repair groups, and each node is in $\binom{C-1}{K-1}$ repair groups. By sending demand $\frac{\mu}{\binom{C-1}{K-1}}$ \rmv{requests} to each repair group, requests to each node occur at the service rate and the system can serve demand $\frac{\binom{C}{K}}{\binom{C-1}{K-1}}\mu=\frac{C}{K}\mu$ \rmv{requests}. Since this demand\rmv{these requests} can be for any file, the \rmv{service capacity} service rate region is $\sum_{i=1}^K \lambda_i\leq \frac{C}{K}\mu$. Therefore, the maximum achievable $\lambda_K$ is 
\vspace{-.1in}
\[
\lambda_K = L(\lambda_1, \ldots, \lambda_{K - 1}) = \frac{C}{K}\mu - \sum_{i = 1}^{K - 1} \lambda_i. 
\]
\vspace{-.1in}
\end{proof}
The two file case is considered in \cite{aktas2017service}. The situation becomes increasingly complex depending on the number of files $K$ in the system. In the next section, we consider  $K=3$. 

\section{Three files} \label{three}

In this section, we consider the \rmv{service capacity region}service rate region of storage systems for $3$ files with MDS cores. As a corollary to Theorem \ref{allcodedK}, we obtain the service rate region for the case when there are no systematic nodes, which is represented in Fig. \ref{allcodedfigure}. Note that when the demand for one file is zero, then this may be considered a system with only two files. For example, if $\lambda_3 = 0$, then the maximum achievable $\lambda_2$ is $\lambda_2 = \frac{C}{3}\mu -\lambda_1$, which is the region shaded in Fig. \ref{allcodedfigure}.

\rmv{
\begin{corollary}
\label{allcoded3} 
Assume $N_1= N_2 = N_3 = 0$.
If there are $C > 2$ coded nodes, then $\mathcal{S}$ has  $L(\lambda_1, \lambda_2, \lambda_3) =\frac{C}{3}\mu-\sum_{i = 1}^{2} \lambda_i$. 
If there are $C \leq 2$ coded nodes, then $\mathcal{S}$ is the point $(0,0,0).$
\end{corollary}
}

\begin{figure}[h]
\begin{center}
\vspace{-.15in}
\begin{tikzpicture}
[scale=.9, vertices/.style={draw, fill=black, circle, inner sep=0.75pt}]
\node[vertices, label=below:{$\frac{C}{2}\mu $}] (a) at (0,0) {};
\node[vertices, red,fill=red,label=left:{$0$}] (c) at (-1.5,0) {};
\node[vertices, label=left:{$\frac{C}{2}\mu $}] (d) at (-1.5,1.5) {};
\draw (a)--(c);
\draw (a)--(d);
\draw (d)--(c);
    \draw[->,thin,black] (-1.5,0)--(.2,0) node[right]{$\lambda_1$};
    \draw[->,thin,black] (-1.5,0)--(-1.5,1.7) node[above]{$\lambda_2$}; 
\end{tikzpicture}
\hspace{1cm}
\begin{tikzpicture}[scale=.9, vertices/.style={draw, circle, inner sep=0.75pt}]
\draw[->,thin,black] (-.875,.5)--(.15,-.075) node[right]{$\lambda_1$};
\draw[->,thin,black] (-.875,.5)--(-1.875,-.1) node[left]{$\lambda_3$};
\draw[->,thin,black] (-.875,.5)--(-.875,1.7) node[above]{$\lambda_2$};
\node[ fill=black,vertices, label=below:{$\frac{C}{3}\mu$}] (a) at (0,0) {};
\node[fill=black,vertices, label=below right:{$\frac{C}{3}\mu $}] (b) at (-1.75,0) {};
\node[ fill=black,scale=1,red,fill=red,vertices, label=left:{$0$}] (c) at (-.875,0.5) {};
\node[fill=black,vertices, label=left:{$\frac{C}{3}\mu$}] (d) at (-.875,1.5) {};
\foreach \to/\from in {b/c,c/d}
	\draw [dashed] (\to)--(\from);
\draw[ultra thick, dashed,blue](a)--(c);
\fill[green,fill opacity=0.7] (-.875,.5) --(0,0)--(-.875,1.5)--(-.875,.5)--cycle;
\foreach \to/\from in {a/d,b/a,b/d}
	\draw [-] (\to)--(\from);
\end{tikzpicture}
\vspace{-.15in}
\end{center}
\caption{\rmv{Service capacity region} Achievable service rate regions of all-coded-node systems with $2$ files (left) or $3$ files (right).}
\label{allcodedfigure}
\end{figure}
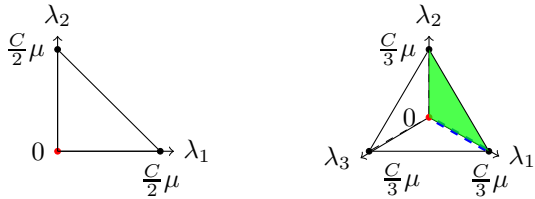

We now consider storage systems that have both coded nodes and systematic nodes. 
Suppose that a coded storage system has $C$ coded nodes and $N_i$ systematic file $f_i$ nodes, $i=1,2,3$. Note that a systematic repair node may be in a repair group with a single node (serving requests for the file it stores) or three nodes (serving requests for any other file).  Any repair group using a coded node contains three nodes.  
For $i=1,2$, if $r_i\leq N_i\mu$ requests for file $f_i$ are served using systematic $f_i$ nodes (and any other demand for file $f_i$ is served using a repair group of three nodes), then the total demand that can be served is bounded above by 
$$D:=r_1+r_2+\frac{(N_1\mu-r_1)+(N_2\mu-r_2)+C\mu}{3}+N_3\mu.$$ 
Given demand $\lambda_1$ for file $f_1$ and $\lambda_2$ for file $f_2$, the rate of requests that may be served for file $f_3$ is bounded above by $\max\{D-\lambda_1-\lambda_2,0\}.$ This is maximized when $r_i=\min\{\lambda_i,N_i\mu\}$ for $i=1,2$. The splitting strategy in the proof of the following theorem meets this bound.
\begin{theorem}
\label{3files}
Assume there are $N_1$, $N_2$, and $N_3$ systematic nodes for files $f_1$, $f_2$, and $f_3$, respectively, and $C$ coded nodes. Assume $\lambda_1 + \lambda_2 \leq \mu N_1 + \mu N_2 + \frac{C}{3} \mu$ and $C \geq \max \left(3, N_{1} - \frac{\lambda_1}{\mu}, N_2 - \frac{\lambda_2}{\mu} \right)$. 
Then $\mathcal{S}$ has  
$L(\lambda_1, \lambda_2)=$ 
\begin{align*}
\begin{cases}
(\frac{C}{3}+\frac{N_1}{3}+\frac{N_2}{3}+N_3)\mu  -\frac{\lambda_{1}}{3} - \frac{\lambda_{2}}{3}, &\hspace{-.05in} 0 \leq \frac{\lambda_i}{\mu} \leq N_i,\,i=1,2 \\
(\frac{C}{3}+N_1 +\frac{N_2}{3}+ N_3)\mu- \lambda_1- \frac{\lambda_2}{3}  , &\hspace{-.05in} N_1 < \frac{\lambda_1}{\mu} \leq N_1 +  N_2 + \frac{C}{3},\\ &\hspace{-.05in}0 \leq \frac{\lambda_2}{\mu} \leq N_2 \\
(\frac{C}{3}+ \frac{N_1}{3}+N_2 + N_3)\mu - \frac{\lambda_1}{3}  - \lambda_2, &\hspace{-.05in} 0 \leq \frac{\lambda_1}{\mu} \leq  N_1,\\ &\hspace{-.05in} N_2 < \frac{\lambda_2}{\mu} \leq  N_1 +N_2 + \frac{C}{3} \\
(\frac{C}{3}+ N_1 + N_2 + N_3)\mu - \lambda_1 -\lambda_2, & \hspace{-.05in} N_1 < \frac{\lambda_1}{\mu} \leq  N_1 + N_2 + \frac{C}{3},\\ &\hspace{-.05in} N_2 < \frac{\lambda_2}{\mu} \leq  N_1 + N_2 + \frac{C}{3}
\end{cases}
\end{align*}

\end{theorem}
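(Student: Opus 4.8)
The plan is to prove achievability (an explicit splitting strategy meeting each claimed value) and a matching converse (the upper bound $D$ derived just before the theorem statement), then observe the two coincide. The converse is essentially already in hand: the discussion preceding the theorem shows that if $r_i \le N_i\mu$ requests for $f_i$ are served on systematic nodes, the servable total is bounded by $D = r_1 + r_2 + \tfrac{(N_1\mu - r_1)+(N_2\mu - r_2)+C\mu}{3} + N_3\mu$, maximized at $r_i = \min\{\lambda_i, N_i\mu\}$; substituting $r_i = \min\{\lambda_i,N_i\mu\}$ and setting $\lambda_3 = D - \lambda_1 - \lambda_2$ produces exactly the four cases in the statement, according to whether $\lambda_i/\mu \le N_i$ or $\lambda_i/\mu > N_i$ for $i=1,2$. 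So the converse reduces to a case split and routine arithmetic.

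The real work is achievability: exhibiting, in each of the four regimes, nonnegative $\alpha_{ki}$ summing to $1$ per file so that every node's total incoming rate is $\le \mu$. First I would fix the structure of a good strategy. For $f_1$: route $r_1 = \min\{\lambda_1, N_1\mu\}$ of the demand uniformly across the $N_1$ systematic $f_1$-nodes, and the residual $\lambda_1 - r_1$ (nonzero only when $\lambda_1/\mu > N_1$) uniformly across all size-$3$ repair groups drawn from the $C$ coded nodes (using Condition 1 of the MDS core, with $C \ge 3$ guaranteed by hypothesis). Handle $f_2$ symmetrically. For $f_3$: route $\min\{\lambda_3, N_3\mu\}$ across the $N_3$ systematic $f_3$-nodes, and the rest across the size-$3$ coded repair groups. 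By the symmetry of "route uniformly over all $k$-subsets of a fixed pool," each coded node receives an equal share of whatever total demand $T$ is pushed onto the coded core, namely $\tfrac{3T}{C}$ (each of the $\binom{C}{3}$ groups gets $T/\binom{C}{3}$, each node lies in $\binom{C-1}{2}$ of them, giving $\tfrac{3}{C}T$ as in the proof of Theorem~\ref{allcodedK}); each systematic $f_i$-node receives $r_i/N_i \le \mu$. So the only binding constraint is $\tfrac{3}{C}\big((\lambda_1 - r_1) + (\lambda_2 - r_2) + (\lambda_3 - r_3)\big) \le \mu$, i.e. the total coded-core load is at most $C\mu/3$, and one checks that plugging the claimed $L(\lambda_1,\lambda_2)$ for $\lambda_3$ makes this hold with equality — which is precisely why the bound $D$ is tight.

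A subtlety I would be careful about is whether the systematic nodes of the \emph{other} files can absorb extra $f_k$-demand: by Condition 3 of the MDS-core definition, a systematic $f_j$-node together with two coded nodes forms an $f_k$-repair group, so in principle one could offload more than $C\mu/3$ onto the core's periphery. However, using such mixed groups only adds load to coded nodes (and to already-busy systematic nodes when $r_j < N_j\mu$ is false), so it never helps relative to the pure strategy above when the goal is to maximize $\lambda_3$; I would note this to justify that the simple strategy is optimal and that the four-case formula is exactly right, not merely a lower bound. The remaining checks are: nonnegativity of all $\alpha$'s (immediate, since $r_i \le \lambda_i$ and $r_i \le N_i\mu$), the per-file sums equal $1$ (immediate), and the feasibility of $\lambda_3 \ge 0$ under the standing hypothesis $\lambda_1 + \lambda_2 \le \mu N_1 + \mu N_2 + \tfrac{C}{3}\mu$ (this is what prevents the formula from going negative in the fourth case, and the analogous sub-case bounds $\lambda_i/\mu \le N_1 + N_2 + C/3$ do the same in cases two and three).

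The main obstacle is bookkeeping rather than insight: verifying that the single uniform-splitting strategy simultaneously meets all four piecewise expressions and that the regime boundaries in the statement are exactly where $\min\{\lambda_i, N_i\mu\}$ switches — together with ruling out that clever use of Condition-3 mixed repair groups could beat the bound. I expect the cleanest write-up to prove the converse first (it fixes the target value in each region), then present one strategy and verify node-by-node that it achieves that value, rather than writing four separate strategies.
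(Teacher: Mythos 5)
Your converse matches the paper's: the bound $D$ stated just before the theorem, maximized at $r_i=\min\{\lambda_i,N_i\mu\}$, is exactly how the paper fixes the target value in each region, and the case split is routine arithmetic. The gap is in achievability. Your ``pure'' strategy sends every residual demand --- in particular all of the $f_3$ demand beyond $N_3\mu$ --- to repair groups drawn solely from the $C$ coded nodes, so the coded core contributes at most $\frac{C}{3}\mu$ of service and your strategy tops out at $N_3\mu+\frac{C}{3}\mu$ minus whatever the core must absorb from the $f_1,f_2$ residuals. In Case 1 the theorem claims $L=N_3\mu+\frac{C}{3}\mu+\frac{1}{3}\bigl((N_1\mu-\lambda_1)+(N_2\mu-\lambda_2)\bigr)$; the extra term is precisely the leftover capacity of the systematic $f_1$ and $f_2$ nodes counted at rate $1/3$, and it is unreachable without the Condition-3 mixed repair groups. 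Had you carried out your own proposed check --- plug the claimed $L$ into the coded-core constraint --- you would find the core overloaded by exactly that extra term, so the verification fails rather than closing with equality. The same shortfall occurs in Cases 2 and 3; only in Case 4, where no systematic $f_1,f_2$ capacity is left over, does your strategy meet the claimed value.

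The assertion that ``using mixed groups only adds load to coded nodes, so it never helps'' is therefore the wrong turn, and it is wrong in an instructive way: a mixed $f_3$-repair group consisting of one idle systematic $f_1$ node, one idle systematic $f_2$ node, and one coded node delivers one unit of $f_3$ service per unit of coded-node capacity consumed, versus one unit per three coded-node units for a pure coded triple, and it monetizes capacity your strategy leaves stranded. The paper's proof is organized entirely around this point: after Step 1 it orders the leftover systematic pools so that $\mu'_{\sigma(1)}N'_{\sigma(1)}\le\mu'_{\sigma(2)}N'_{\sigma(2)}$, first drains the smaller pool using the $N_1'N_2'C$ groups containing one node of each type, then drains the larger using the $N'_{\sigma(2)}\binom{C}{2}$ groups with one systematic and two coded nodes, and only then falls back on the all-coded argument of Theorem~\ref{allcodedK}. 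That staged draining of the periphery is the missing idea; without it your construction proves only a strictly weaker lower bound in Cases 1--3.
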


\begin{proof}
Consider a system with $N_1$, $N_2$, and $N_3$ systematic nodes for files $f_1$, $f_2$, and $f_3$ and $C$ coded nodes. \\


\textbf{Step 1:  Send requests to systematic nodes at the service rate to serve demand for files $f_1$ and $f_2$, as possible.  If any $f_i$ ($i=1,2$) systematic nodes remain available, distribute remaining file $f_i$ demand uniformly across those nodes.} 
\begin{example} \label{step1} Consider a $3$-file system with $N_1=3$, $N_2=1$, $N_3=1$, and $C=3$.
\begin{center}
\begin{tikzpicture}[scale=1]
\draw [rotate around={0.:(-2,4.7)}] (-2,4.7) ellipse (0.5cm and 0.25cm);
\draw [rotate around={0.:(-0.9,4.7)}] (-0.9,4.7) ellipse (0.5cm and 0.25cm);
\draw [rotate around={0.:(0.2,4.7)}] (0.2,4.7) ellipse (0.5cm and 0.25cm);
\draw [rotate around={0.:(1.3,4.7)}] (1.3,4.7) ellipse (0.5cm and 0.25cm);
\draw [rotate around={0.:(2.4,4.7)}] (2.4,4.7) ellipse (0.5cm and 0.25cm);
\draw [rotate around={0.:(3.5,4.7)}] (3.5,4.7) ellipse (0.5cm and 0.25cm);
\draw [rotate around={0.:(4.6,4.7)}] (4.6,4.7) ellipse (0.5cm and 0.25cm);
\draw [rotate around={0.:(5.7,4.7)}] (5.7,4.7) ellipse (0.5cm and 0.25cm);
\draw (-2,4.65) node {$f_1$};
\draw (-.9,4.65) node {$f_1$};
\draw (.2,4.65) node {$f_1$};
\draw (1.3,4.65) node {$f_2$};
\draw (2.4,4.65) node {$f_3$};
\draw (3.5,4.65) node {$c$};
\draw (4.6,4.65) node {$c$};
\draw (5.7,4.65) node {$c$};
\end{tikzpicture}
\end{center}
If $\lambda_1=\frac{3}{2}\mu$ and $\lambda_2=2\mu$ then $\mu$ requests for $f_1$ will be served by one of the $f_1$ systematic nodes, and the remaining $\frac{1}{2}\mu$ requests for $f_1$ will be split between the other $2$ systematic nodes. Also, $\mu$ requests for $f_2$ will be served by the $f_2$ systematic node. After Step 1, the remaining demand for $f_1$ is $0$ and the remaining demand for file $f_2$ is $\mu$. In the system, there are now two systematic $f_1$ nodes that can handle an additional $\frac{3}{4}\mu$ requests as well as one systematic $f_3$ node and three coded nodes each with available service rate $\mu$.
\begin{center}
\begin{tikzpicture}[scale=1]
\tikzset{
    partial ellipse/.style args={#1:#2:#3}{
        insert path={+ (#1:#3) arc (#1:#2:#3)}
    }
}

\draw [fill=gray,rotate around={0.:(-2,4.7)}] (-2,4.7) ellipse (0.5cm and 0.25cm);
\draw [rotate around={0.:(-0.9,4.7)}] (-0.9,4.7) ellipse (0.5cm and 0.25cm);
\draw[fill=gray,rotate around={0.:(-0.9,4.7)}] (-0.9,4.7) [partial ellipse=200:340:.5cm and .25cm];
\draw [rotate around={0.:(0.2,4.7)}] (0.2,4.7) ellipse (0.5cm and 0.25cm);
\draw[fill=gray,rotate around={0.:(0.2,4.7)}] (0.2,4.7) [partial ellipse=200:340:.5cm and .25cm];
\draw [fill=gray,pattern=checkerboard light gray,rotate around={0.:(1.3,4.7)}] (1.3,4.7) ellipse (0.5cm and 0.25cm);
\draw [rotate around={0.:(2.4,4.7)}] (2.4,4.7) ellipse (0.5cm and 0.25cm);
\draw [rotate around={0.:(3.5,4.7)}] (3.5,4.7) ellipse (0.5cm and 0.25cm);
\draw [rotate around={0.:(4.6,4.7)}] (4.6,4.7) ellipse (0.5cm and 0.25cm);
\draw [rotate around={0.:(5.7,4.7)}] (5.7,4.7) ellipse (0.5cm and 0.25cm);
\draw (-2,4.65) node {$f_1$};
\draw (-.9,4.65) node {$f_1$};
\draw (.2,4.65) node {$f_1$};
\draw (1.3,4.65) node {$f_2$};
\draw (2.4,4.65) node {$f_3$};
\draw (3.5,4.65) node {$c$};
\draw (4.6,4.65) node {$c$};
\draw (5.7,4.65) node {$c$};
\end{tikzpicture}
\end{center}
\end{example}
At the end of Step 1, if $\lambda_i \leq \mu N_i$ for $i = 1$ or $2$, then there will be $N_i'=N_i-\lfloor\frac{\lambda_i}{\mu}\rfloor$ systematic nodes remaining available for $f_i$, each with service rate reduced to $\mu_i'=\mu-\frac{\lambda_i-\lfloor\frac{\lambda_i}{\mu}\rfloor\cdot\mu}{N_i'}$.
Since $\lambda_i \leq \mu N_i$, the remaining demand for file $f_i$ is $\lambda_i'=0$.

If $\lambda_i \geq \mu N_i$ for $i = 1$ or $2$, we exhaust every $f_i$ systematic node. 
The remaining demand for file $f_i$ is then $\lambda_i' = \lambda_i - \mu N_{i}$, and $N_i'=0$ systematic $f_i$ nodes remain.

\textbf{Step 2: Serve any remaining demand for files $f_1$ and $f_2$. Finally, serve demand for file $f_3$.}
\begin{example}\label{step2}
Consider the system in Example \ref{step1}. In Step 2 we want to serve the remaining requests for file $f_2$ in a way that maximizes the requests that can be handled for $f_3$. In particular, we will reserve the use of systematic $f_3$ nodes for accessing file $f_3$.
Note that there are $2\cdot\binom{3}{2}=6$ repair groups for file $f_2$ that involve one systematic $f_1$ node and  two coded nodes. If we send $\frac{\mu}{6}$ requests for file $f_2$ to each of these repair groups, then all the requests for file $f_2$ are served, each $f_1$ systematic node can serve  $\frac{\mu}{4}$ more requests (as each $f_1$ node is in $3$ repair groups) and each coded node can serve $\frac{\mu}{3}$ more requests (as each coded node is in $4$ repair groups).

\begin{center}
\begin{tikzpicture}[scale=1]
\tikzset{
    partial ellipse/.style args={#1:#2:#3}{
        insert path={+ (#1:#3) arc (#1:#2:#3)}
    }
}

\draw [fill=gray,rotate around={0.:(-2,4.7)}] (-2,4.7) ellipse (0.5cm and 0.25cm);
\draw [rotate around={0.:(-0.9,4.7)}] (-0.9,4.7) ellipse (0.5cm and 0.25cm);
\draw[pattern=checkerboard light gray,rotate around={0.:(-0.9,4.7)}] (-0.9,4.7) [partial ellipse=160:380:.5cm and .25cm];
\draw[fill=gray,rotate around={0.:(-0.9,4.7)}] (-0.9,4.7) [partial ellipse=200:340:.5cm and .25cm];
\draw [rotate around={0.:(0.2,4.7)}] (0.2,4.7) ellipse (0.5cm and 0.25cm);
\draw[pattern=checkerboard light gray,rotate around={0.:(0.2,4.7)}] (0.2,4.7) [partial ellipse=160:380:.5cm and .25cm];
\draw[fill=gray,rotate around={0.:(0.2,4.7)}] (0.2,4.7) [partial ellipse=200:340:.5cm and .25cm];
\draw [pattern=checkerboard light gray,rotate around={0.:(1.3,4.7)}] (1.3,4.7) ellipse (0.5cm and 0.25cm);
\draw [rotate around={0.:(2.4,4.7)}] (2.4,4.7) ellipse (0.5cm and 0.25cm);
\draw [rotate around={0.:(3.5,4.7)}] (3.5,4.7) ellipse (0.5cm and 0.25cm);
\draw[pattern=checkerboard light gray,rotate around={0.:(3.5,4.7)}] (3.5,4.7) [partial ellipse=165:375:.5cm and .25cm];
\draw [rotate around={0.:(4.6,4.7)}] (4.6,4.7) ellipse (0.5cm and 0.25cm);
\draw[pattern=checkerboard light gray,rotate around={0.:(4.6,4.7)}] (4.6,4.7) [partial ellipse=165:375:.5cm and .25cm];
\draw [rotate around={0.:(5.7,4.7)}] (5.7,4.7) ellipse (0.5cm and 0.25cm);
\draw[pattern=checkerboard light gray,rotate around={0.:(5.7,4.7)}] (5.7,4.7) [partial ellipse=165:375:.5cm and .25cm];
\draw (-2,4.65) node {$f_1$};
\draw (-.9,4.65) node {$f_1$};
\draw (.2,4.65) node {$f_1$};
\draw (1.3,4.65) node {$f_2$};
\draw (2.4,4.65) node {$f_3$};
\draw (3.5,4.65) node {$c$};
\draw (4.6,4.65) node {$c$};
\draw (5.7,4.65) node {$c$};
\end{tikzpicture}
\end{center}

Finally, requests for $f_3$ may be served. Sending $\frac{\mu}{12}$ requests to each of the $6$ repair groups with one $f_1$ node and two coded nodes exhausts each $f_1$ node and each coded node. The full service rate of the systematic $f_3$ node may also be used to serve requests for $f_3$. Thus a total of $6\cdot\frac{\mu}{12}+\mu=\frac{3}{2}\mu$ requests for $f_3$ may be served.
\begin{center}
\begin{tikzpicture}[scale=1]
\tikzset{
    partial ellipse/.style args={#1:#2:#3}{
        insert path={+ (#1:#3) arc (#1:#2:#3)}
    }
}

\draw [fill=gray,rotate around={0.:(-2,4.7)}] (-2,4.7) ellipse (0.5cm and 0.25cm);
\draw [pattern=horizontal lines light blue,rotate around={0.:(-0.9,4.7)}] (-0.9,4.7) ellipse (0.5cm and 0.25cm);
\draw[pattern=horizontal lines light blue,pattern=checkerboard light gray,rotate around={0.:(-0.9,4.7)}] (-0.9,4.7) [partial ellipse=160:380:.5cm and .25cm];
\draw[fill=gray,rotate around={0.:(-0.9,4.7)}] (-0.9,4.7) [partial ellipse=200:340:.5cm and .25cm];
\draw [pattern=horizontal lines light blue,rotate around={0.:(0.2,4.7)}] (0.2,4.7) ellipse (0.5cm and 0.25cm);
\draw[pattern=checkerboard light gray,rotate around={0.:(0.2,4.7)}] (0.2,4.7) [partial ellipse=160:380:.5cm and .25cm];
\draw[fill=gray,rotate around={0.:(0.2,4.7)}] (0.2,4.7) [partial ellipse=200:340:.5cm and .25cm];
\draw [pattern=checkerboard light gray,rotate around={0.:(1.3,4.7)}] (1.3,4.7) ellipse (0.5cm and 0.25cm);
\draw [pattern=horizontal lines light blue,rotate around={0.:(2.4,4.7)}] (2.4,4.7) ellipse (0.5cm and 0.25cm);
\draw [pattern=horizontal lines light blue,rotate around={0.:(3.5,4.7)}] (3.5,4.7) ellipse (0.5cm and 0.25cm);
\draw[pattern=checkerboard light gray,rotate around={0.:(3.5,4.7)}] (3.5,4.7) [partial ellipse=165:375:.5cm and .25cm];
\draw [pattern=horizontal lines light blue,rotate around={0.:(4.6,4.7)}] (4.6,4.7) ellipse (0.5cm and 0.25cm);
\draw[pattern=checkerboard light gray,rotate around={0.:(4.6,4.7)}] (4.6,4.7) [partial ellipse=165:375:.5cm and .25cm];
\draw [pattern=horizontal lines light blue,rotate around={0.:(5.7,4.7)}] (5.7,4.7) ellipse (0.5cm and 0.25cm);
\draw[pattern=checkerboard light gray,rotate around={0.:(5.7,4.7)}] (5.7,4.7) [partial ellipse=165:375:.5cm and .25cm];
\draw (-2,4.65) node {$f_1$};
\draw (-.9,4.65) node {$f_1$};
\draw (.2,4.65) node {$f_1$};
\draw (1.3,4.65) node {$f_2$};
\draw (2.4,4.65) node {$f_3$};
\draw (3.5,4.65) node {$c$};
\draw (4.6,4.65) node {$c$};
\draw (5.7,4.65) node {$c$};
\end{tikzpicture}
\end{center}
\end{example}
How requests are served in Step 2 depends on the demand and number of systematic nodes for files $f_1$ and $f_2$. Let $\lambda$ be the total demand for files $f_1$ and $f_2$ that remains after Step 1; that is, $\lambda = \lambda_1' + \lambda_2'$.


\textit{\textbf{Case 1} ($0 \leq \lambda_1 \leq \mu N_1$, $0 \leq \lambda_2 \leq \mu N_2$):} 
In this case, $\lambda=0$,  so all  available system resources  may be used to serve  $f_3$ demand.  The full service rate of file $f_3$ systematic nodes may be used,  serving  demand $\mu N_3$ for file $f_3$.
Let $\sigma$ be a permutation on $\{1, 2\}$ such that $\frac{\mu_{\sigma(1)}'}{N_{\sigma(2)}'}\leq \frac{\mu_{\sigma(2)}'}{N_{\sigma(1)}'}$. 

There are $N_1'N_2'C$ $f_3$--repair groups with a systematic node 
for each of $f_1$ and $f_2$, and one coded node. 
Recall, $C \geq \max \left(3, N_{1} - \frac{\lambda_1}{\mu}, N_2 - \frac{\lambda_2}{\mu} \right)$. 
Since $C \geq N_{\sigma(1)} - \frac{\lambda_{\sigma(1)}}{\mu}$, 
$$\mu_{\sigma(1)}'N_{\sigma(1)}' =  \mu  \left(N_{\sigma(1)} -  \frac{\lambda_{\sigma(1)}}{\mu} \right)$$
demand for  $f_3$ can be served by sending $\frac{\mu_{\sigma(1)}'}{N_{\sigma(2)}'C}$ demand to each of these repair groups. 
The service rate of each  $f_{\sigma(1)}$ is reduced to $0$, while  $f_{\sigma(2)}$ systematic nodes have $\mu_{\sigma(2)}''=\mu_{\sigma(2)}'-\frac{\mu_{\sigma(1)}'}{N_{\sigma(2)}'C}N_{\sigma(1)}'C=\mu_{\sigma(2)}'-\frac{\mu_{\sigma(1)}'}{N_{\sigma(2)}'}N_{\sigma(1)}'$, and coded nodes have $\mu'_C=\mu-\frac{\mu_{\sigma(1)}'}{N_{\sigma(2)}'C}N_{\sigma(1)}'N_{\sigma(2)}'=\mu-\frac{\mu_{\sigma(1)}'}{C}N_{\sigma(1)}'$.




There are $N'_{\sigma(2)}\binom{C}{2}$ $f_3$--repair groups with one of the remaining systematic file $f_{\sigma(2)}$ nodes and $2$ coded nodes. Since $C \geq N_{\sigma(2)} - \frac{\lambda_{\sigma(2)}}{\mu}$, similarly to before, we can serve $$\mu''_{\sigma(2)}N_{\sigma(2)}' = \mu \left( \left(N_{\sigma(2)} - \frac{\lambda_{\sigma(2)}}{\mu}\right) - \left(N_{\sigma(1)} - \frac{\lambda_{\sigma(1)}}{\mu}\right)\right) $$ demand for file $f_3$ by sending demand equally to each of these $f_3$--repair groups. 
Each coded node has remaining service rate $\mu''_C=\mu'_C-\frac{\mu''_{\sigma(2)}}{\binom{C}{2}}(C-1)N_{\sigma(2)}',$ and no systematic $f_1, f_2$ nodes remain available. 

Since $C\geq 3$, as in the case  in Theorem \ref{allcodedK} with $C$ coded nodes and no systematic nodes, 
the service rate $\mu_C''$ of these coded nodes can be used to serve $\frac{C}{3}\mu''_C$ demand for file $f_3$.


Thus, the maximum achievable $\lambda_3$ is 
$L(\lambda_1, \lambda_2)$
\begin{align*}
=& \frac{C}{3}\mu''_C+ \mu''_{\sigma(2)}N_{\sigma(2)}'+ \mu_{\sigma(1)}'N_{\sigma(1)}' +\mu N_3\\
=& \frac{1}{3}\left(C \mu + \mu N_{\sigma(2)} -\lambda_{\sigma(2)} + \mu N_{\sigma(1)} - \lambda_{\sigma(1)} \right) +\mu N_3.
\end{align*}




%
\rmv{
If $\sigma(1) = 1$ and $\sigma(2) = 2$, then 
\begin{align*}
L(\lambda_1, \lambda_2) &= 
\frac{C}{3}\mu + \left(1-\frac{2(C -1)(C - 2)!}{3}\right)\mu_{2}'N_{2}' + \left(\frac{4}{3}-\frac{2(C -1)(C - 2)!}{3}\right)\mu_{1}'N_{1}' \\
&= \frac{C}{3}\mu + \left(1-\frac{2(C -1)(C - 2)!}{3}\right)\left(\mu-\frac{\lambda_2'}{N_2'}\right)N_{2}' + \left(\frac{4}{3}-\frac{2(C -1)(C - 2)!}{3}\right)\left(\mu-\frac{\lambda_1'}{N_1'}\right)'N_{1}' \\
&= \left(\frac{7}{3} + \frac{C}{3} -\frac{4(C-1)(C-2)!}{3}\right)\mu - \left(1-\frac{2(C -1)(C - 2)!}{3}\right)\left(\lambda_2 - \mu \left \lfloor \frac{\lambda_2}{\mu} \right \rfloor \right) \\
&- \left(\frac{4}{3} -\frac{2(C -1)(C - 2)!}{3}\right)\left(\lambda_1 - \mu \left \lfloor \frac{\lambda_1}{\mu} \right \rfloor \right). 
\end{align*}}

\rmv{
$$
\begin{array}{l}
=\frac{C}{3}\mu + \left(1-\frac{2(C -1)(C - 2)!}{3}\right)\mu_{2}'N_{2}' \\ \quad + \left(\frac{4}{3}-\frac{2(C -1)(C - 2)!}{3}\right)\mu_{1}'N_{1}' \\
= \frac{C}{3}\mu + \left(1-\frac{2(C -1)(C - 2)!}{3}\right)\left(\mu-\frac{\lambda_2'}{N_2'}\right)N_{2}' \\ \quad + \left(\frac{4}{3}-\frac{2(C -1)(C - 2)!}{3}\right)\left(\mu-\frac{\lambda_1'}{N_1'}\right)'N_{1}' \\
= \left(\frac{7}{3} + \frac{C}{3} -\frac{4(C-1)(C-2)!}{3}\right)\mu \\ \quad - \left(1-\frac{2(C -1)(C - 2)!}{3}\right)\left(\lambda_2 - \mu \left \lfloor \frac{\lambda_2}{\mu} \right \rfloor \right) \\
\\ \quad - \left(\frac{4}{3} -\frac{2(C -1)(C - 2)!}{3}\right)\left(\lambda_1 - \mu \left \lfloor \frac{\lambda_1}{\mu} \right \rfloor \right). 
\end{array}
$$
}

\rmv{
If $\sigma(1) = 2$ and $\sigma(2) = 1$, then
\begin{align*}
L(\lambda_1, \lambda_2)
&= \left(\frac{7}{3} + \frac{C}{3} -\frac{4(C-1)(C-2)!}{3}\right)\mu - \left(1-\frac{2(C -1)(C - 2)!}{3}\right)\left(\lambda_1 - \mu \left \lfloor \frac{\lambda_1}{\mu} \right \rfloor \right) \\
&- \left(\frac{4}{3} -\frac{2(C -1)(C - 2)!}{3}\right)\left(\lambda_2 - \mu \left \lfloor \frac{\lambda_2}{\mu} \right \rfloor \right). 
\end{align*}}
\rmv{
If $\sigma(1) = 2$ and $\sigma(2) = 1$, then
$L(\lambda_1, \lambda_2)$ 
$$
\begin{array}{l}
= \left(\frac{7}{3} + \frac{C}{3} -\frac{4(C-1)(C-2)!}{3}\right)\mu \\ \quad- \left(1-\frac{2(C -1)(C - 2)!}{3}\right)\left(\lambda_1 - \mu \left \lfloor \frac{\lambda_1}{\mu} \right \rfloor \right) \\
- \left(\frac{4}{3} -\frac{2(C -1)(C - 2)!}{3}\right)\left(\lambda_2 - \mu \left \lfloor \frac{\lambda_2}{\mu} \right \rfloor \right). 
\end{array}
$$}
Similar arguments can be used for \textbf{Case 2}: $\mu N_1 < \lambda_1 \leq \mu N_1 + \mu N_2 + \frac{C}{3}\mu$, $0 \leq \lambda_2 \leq \mu N_2$ and \textbf{Case 3}: $0 \leq \lambda_1 \leq \mu N_1$, $\mu N_2 < \lambda_2 \leq \mu N_1 + \mu N_2 + \frac{C}{3}\mu$ (see Example \ref{step2}).

\textit{\textbf{Case 4} ($\mu N_1 < \lambda_1 \leq \mu N_1 + \mu N_2 + \frac{C}{3}\mu$, $\mu N_2 < \lambda_2 \leq \mu N_1 + \mu N_2 + \frac{C}{3}\mu$):}
In this case, 
all available repair groups consist entirely of coded nodes. 
Since  demand $\mu N_i$  for file $f_i$ ($i=1,2$) was satisfied in Step 1,
the remaining total demand for files $f_1$ and $f_2$ is $\lambda < \frac{C}{3}\mu$.
Since $C \geq 3$, this can be served by sending demand equally to every  coded repair group. The coded nodes' remaining ability to service can be used for file $f_3$.
Thus, the maximum achievable $\lambda_3$ is 
\begin{align*}
L(\lambda_1, \lambda_2)&= \frac{C}{3}\mu - \lambda + \mu N_3\\
&= \frac{C}{3}\mu - (\lambda_1 - \mu N_1 + \lambda_2 - \mu N_2) + \mu N_3.
\vspace*{-3ex}
\end{align*}
\end{proof}

Note that $L(\lambda_1,\lambda_2)$ can be found for  systems with $C<3$ coded nodes in a similar way. When $C<3$, all repair groups must contain systematic nodes for at least $3-C$ distinct files.

\section{MDS $K$-file cores} \label{Kfiles_section}

Theorem \ref{3files} may be generalized to provide an algorithm for maximizing $\lambda_k$ for the general $K$-file case. Assume we have an MDS $K$-file core with $N_1, N_2, \ldots, N_K$ systematic nodes for files $f_1, f_2, \ldots, f_K$, respectively, and $C$ coded nodes, with  demand $\lambda_1, \lambda_2, \ldots, \lambda_{K - 1}$  for files $f_1, f_2, \ldots, f_{K - 1}$. As in Theorem \ref{3files}, we again assume $\lambda_1 + \ldots + \lambda_{K-1} \leq \mu N_1 + \ldots + \mu N_{K - 1} + \frac{C}{K}\mu$. 
Our goal is to identify the maximal file $f_K$ request rate that can be served. 

We can first serve file $f_1, f_2, \ldots, f_{K - 1}$ demand\rmv{requests}   using their respective systematic nodes. This process is analogous to Step 1 in Theorem \ref{3files}. 
Note, in this algorithm,  the same demand\rmv{requests} is sent to every file $f_i$ systematic node, and also to every coded node, so we can let $\mu_i$ and $\mu_C$ represent the updated service rate of systematic file $f_i$ nodes and coded nodes, respectively. 

We can then serve any remaining total demand $\lambda = \lambda_1 + \ldots + \lambda_{K - 1}$ 
 using $K$-tuples of coded and systematic nodes. This is analogous to Step 2 in Theorem \ref{3files}. 
 Let $K' := \sum_{i= 1}^{K - 1} \sgn(N_i)$ denote the number of files (excluding file $f_K$) for which the system contains systematic nodes.
 There are $\left( \prod_{\{i = 1 \; | \; N_i > 0 \}}^{K-1} N_i \right) \binom{C}{K - K'}$ repair groups with $K'$  systematic nodes and $K - K'$ coded nodes. Letting $m$ be the index minimizing $N_i\mu_i$ for positive $N_i\mu_i$, 
 we can serve demand $\mu_m N_m$ 
 by sending $\frac{\mu_m}{\left( \prod_{\{i = 1 \; | \; N_i > 0 \text{ and } i \neq m \}}^{K-1} N_i \right)\binom{C}{K - K'}}$ demand\rmv{requests}  to each of these repair groups. 
 This exhausts file $f_m$ systematic nodes, while  file  $f_j$ systematic nodes ($j\neq m$, $N_j > 0$, $1 \leq j \leq K - 1$) have reduced service rate
\begin{multline}
\label{systematicupdate}
\mu_j - \frac{\mu_m\left( \prod_{\{i = 1 \; | \; N_i > 0 \text{ and } i \neq j \}}^{K-1} N_i \right) \binom{C}{K - K'}}{\left( \prod_{\{i = 1 \; | \; N_i > 0 \text{ and } i \neq m \}}^{K-1} N_i \right)\binom{C}{K - K'} } ,
\end{multline}
which is $\mu_j - \frac{\mu_m N_m}{N_j}.$
The remaining coded nodes have reduced service rate
\begin{multline}
\label{codedupdate}
\mu_C - \frac{\mu_m\left( \prod_{\{i = 1 \; | \; N_i > 0 \}}^{K-1} N_i \right) \binom{C - 1}{K - K'- 1}}{\left( \prod_{\{i = 1 \; | \; N_i > 0 \text{ and } i \neq m \}}^{K-1} N_i \right)\binom{C}{K - K'} } 
,
\end{multline}
which is $\mu_j - \frac{\mu_m (K - K')}{C}.$
We can continue in this way until the systematic node service rate is met for all but file $f_K$.
Then, we can use repair groups that consist entirely of coded nodes, applying Theorem \ref{allcodedK}. Once all  demand for files $f_1,\dots, f_{K - 1}$ has been satisfied, we can follow a similar process to utilize any remaining system resources to serve  demand for file $f_K$. Note, once  the  coded nodes have been exhausted, or if there are too few  coded nodes to form a $K$-tuple,  no demand may be satisfied using only coded nodes. We may then serve demand for  file $f_K$ using systematic file $f_K$ nodes.

\begin{algorithm}
\caption{Maximize $\lambda_K$}
\begin{algorithmic} 
\STATE \textbf{INPUT:} $\lambda_1, \lambda_2, \ldots, \lambda_{K-1}$, $N_1, N_2, \ldots, N_{K}, C, \mu$ 

\STATE \textbf{OUTPUT:} $\lambda_K$ 

\vspace{.05in}
\STATE $\lambda_K \leftarrow 0$ 

\STATE $\mu_C,\mu_i \leftarrow \mu$ for $i$ from $1$ to $K$  


\vspace{.05in}

\textbf{Step 1:} 

\vspace{.05in}

\FOR{$i$ from $1$ to $K - 1$} 
\IF{$\lambda_i \leq \mu N_i$} 

\STATE $\lambda_i \leftarrow 0$ 

\STATE $N_i \leftarrow N_i  - \left \lfloor \frac{\lambda_i}{N_i} \right \rfloor$ 

\STATE $\mu_i \leftarrow \mu  - \frac{\lambda_i - \left \lfloor \frac{\lambda_i}{\mu} \right \rfloor \mu }{N_i} $ 

\ELSE

\STATE $\lambda_i \leftarrow \lambda_i - \mu N_i$ 

\STATE $N_i, \mu_i \leftarrow 0$ 


\ENDIF

\ENDFOR

\vspace{.05in}

\textbf{Step 2:} 

\vspace{.05in}
\STATE $\lambda \leftarrow  \sum_{i = 1}^{K - 1} \lambda_i$ 
\STATE $K' \leftarrow \sum_{i = 1}^{K - 1} \sgn(N_i)$
\WHILE{$C > 0$ \textbf{and} $C \geq K - K'$} 

\IF{$K' > 0$} 

\STATE $m \leftarrow$ the index $i$ minimizing $N_i\mu_i, N_i\mu_i>0 $

\STATE $l \leftarrow \min(\mu_m N_m, \mu_C C)$ 

\IF{$\lambda > 0$} 

\IF{$\lambda \geq l$} 

\STATE $\lambda \leftarrow \lambda - l $ 

\ELSE

\STATE  $\lambda_K \leftarrow \lambda_K + (l - \lambda)$ 

\STATE  $\lambda \leftarrow 0$ 

\ENDIF

\ELSE

\STATE $\lambda_K \leftarrow \lambda_K + l$ 

\ENDIF

\IF{$l = \lambda_m N_m$} 

\STATE $\mu_C \leftarrow$ apply Equation \ref{codedupdate} 

\STATE $N_m,\mu_m \leftarrow 0$ 


\STATE $K' \leftarrow K' - 1$ 

\ELSE

\STATE $\mu_C, C \leftarrow 0$ 


\ENDIF



\STATE $\mu_j \leftarrow $ apply Equation \ref{systematicupdate} if $N_j > 0$ for $1 \leq j \leq K - 1$



\ELSE 

\IF{$\lambda > 0$}



\STATE $\mu_C \leftarrow \mu_C - \frac{\lambda}{\binom{C}{K}} \binom{C - 1}{K  - 1}$
\STATE $\lambda \leftarrow 0$ 
\ENDIF
\STATE $\lambda_K \leftarrow \lambda_K + \frac{C}{K} \mu_C$ 

\STATE $C \leftarrow 0$


\ENDIF

\ENDWHILE

\STATE $\lambda_K \leftarrow \lambda_K + \mu_K N_K$

\end{algorithmic}
\end{algorithm}

\rmv{
 In this paper, we have expanded the results of \cite{aktas2017service} to more general storage systems with MDS cores. The algorithm outlined in Section \ref{Kfiles_section} highlights the commonality in the proofs of Theorems \ref{allcodedK} and \ref{3files}, and the generalizability of that approach.  The extensive  conditions and cases needed for Theorem \ref{3files} highlights the difficulty of identifying a precise formula  for the service capacity region boundary   of a  storage system that contains both coded and systematic nodes.  More work is needed if a precise formulation is desired for all $3$-file systems with MDS cores.  Given the current drive to improve  cloud computing performance, such an undertaking may be worthwhile.  A broader exploration of the impact choice of coding has on set of serviceable request rates would also be valuable. }




\section*{Acknowledgment}

The initial stages of this work were performed at ICERM (Institute for Computational and Experimental Research in Mathematics) in Providence, RI. We are indebted to the organizers of the ICERM 2017 Women in Data Science and Mathematics Research Collaboration Workshop.

\bibliographystyle{IEEEtran}
\bibliography{refs} 





\end{document}